\def\+{\discretionary{}{}{}}
\Crefname{section}{Sec.}{Sects.}
\Crefname{equation}{Eq.}{Eqs.}
\Crefname{figure}{Fig.}{Figs.}
\Crefname{tabular}{Tab.}{Tabs.}
\Crefname{definition}{Def.}{Defs.}
\Crefname{theorem}{Thm.}{Thms.}
\newcommand\pre{\mathit{pre}}
\newcommand\post{\mathit{post}}
\renewcommand{\sec}{\mathrm{s}}
\NewDocumentCommand{\lstcenter}{v}%
{\begin{center}
\begin{tabular}{c}
\ttfamily #1 
\end{tabular}
\end{center}}
\newcommand\KeY{Ke\kern-0.75ptY\xspace}
\newcommand\QAC{\textrm{Qu}\textsc{ac}\xspace}
\crefname{lstlisting}{listing}{listings}
\Crefname{lstlisting}{Listing}{Listings}
\newcommand{\awc}[2][]{\footnote{\awadd{#2}}}
\newcommand{\awadd}[2][]{\begingroup\color{purple} #2\endgroup}
\newcommand{\awtodo}[1]{\textcolor{purple!70!black}{#1}}
\newcommand{\fltodo}[1]{\todo[color=orange]{#1}}
\newcommand{\frtodo}[1]{\textcolor{violet!50}{#1}}
\newcommand{\sttodo}[1]{\todo[color=green!50!black]{#1}}
\newcommand{\cmtodo}[1]{\textcolor{teal}{#1}}
\let\oldparagraph=\paragraph
\renewcommand\paragraph[1]{\oldparagraph{#1.}}
\begin{document}

\title{Quantifying Software Correctness by Combining Architecture Modeling and Formal
  Program Analysis}

\renewcommand{\shorttitle}{Quantifying Software Correctness}

\author{Florian Lanzinger}
\affiliation{
\institution{Karlsruhe Institute of Technology}
\streetaddress{Am Fasanengarten 5}
\city{Karlsruhe}
\country{Germany}
}
\orcid{0000-0001-8560-6324}
\email{lanzinger@kit.edu}

\author{Christian Martin}
\affiliation{
\institution{Karlsruhe Institute of Technology}
\streetaddress{Am Fasanengarten 5}
\city{Karlsruhe}
\country{Germany}
}
\orcid{0009-0004-4332-1194}
\email{christian.martin@kit.edu}

\author{Frederik Reiche}
\affiliation{
\institution{Karlsruhe Institute of Technology}
\streetaddress{Am Fasanengarten 5}
\city{Karlsruhe}
\country{Germany}
}
\orcid{0000-0002-5993-0558}
\email{frederik.reiche@kit.edu}

\author{Samuel Teuber}
\affiliation{
\institution{Karlsruhe Institute of Technology}
\streetaddress{Am Fasanengarten 5}
\city{Karlsruhe}
\country{Germany}
}
\orcid{0000-0001-7945-9110}
\email{teuber@kit.edu}

\author{Robert Heinrich}
\affiliation{
\institution{Karlsruhe Institute of Technology}
\streetaddress{Am Fasanengarten 5}
\city{Karlsruhe}
\country{Germany}
}
\orcid{0000-0003-0779-9444}
\email{robert.heinrich@kit.edu}

\author{Alexander Weigl}
\affiliation{
\institution{Karlsruhe Institute of Technology}
\streetaddress{Am Fasanengarten 5}
\city{Karlsruhe}
\country{Germany}
}
\orcid{0000-0001-8446-4598}
\email{weigl@kit.edu}

\renewcommand{\shortauthors}{Lanzinger et al.}

\begin{abstract}
  Most formal methods see the correctness of a software system as a binary decision.
  However, proving the correctness of complex systems completely is difficult
  because they are composed of multiple components, usage scenarios, and environments. 
  We present \QAC, a modular approach for quantifying the correctness of service-oriented software systems by combining software architecture modeling with deductive verification.
  Our approach is based on a model of the service-oriented architecture and 
  the probabilistic usage scenarios of the system.
  The correctness of a single service is approximated by a coverage region, which is a formula describing which inputs for that service are proven to not lead to an erroneous execution.
  The coverage regions can be determined by a combination of various analyses, e.g., formal verification, expert estimations, or testing.
  The coverage regions and the software model are then combined into
  a probabilistic program. From this, we can compute the probability that under a given usage profile no service is called outside its coverage region.
  %
  %
  We also present an implementation of \QAC{} for Java using the modeling tool Palladio and the deductive verification tool \KeY.
  We demonstrate its usability by applying it to a software simulation of an energy system.
\end{abstract}

\begin{CCSXML}
<ccs2012>
<concept>
<concept_id>10011007.10011074.10011099.10011692</concept_id>
<concept_desc>Software and its engineering~Formal software verification</concept_desc>
<concept_significance>500</concept_significance>
</concept>
<concept>
<concept_id>10011007.10010940.10010971.10010972.10010979</concept_id>
<concept_desc>Software and its engineering~Object oriented architectures</concept_desc>
<concept_significance>500</concept_significance>
</concept>
</ccs2012>
\end{CCSXML}

\ccsdesc[500]{Software and its engineering~Formal software verification}
\ccsdesc[500]{Software and its engineering~Object oriented architectures}

\keywords{Service-oriented architecture, Component-based architecture, Architecture modeling, Deductive verification, Quantitative verification, Architecture simulation, Software reliability estimation}

\maketitle

\section{Introduction}
\label{sec:intro}

\paragraph{Motivation}
Vehicles and critical infrastructure are increasingly governed by software. Therefore, verifying that software behaves according to its specification is becoming ever more important. However,
most formal methods used to prove a program's correctness simply output a binary
decision, which does not do justice to large software systems which
combine many different components in a complex usage environment.

An individual component may behave provably correctly under assumptions that are not ensured by its environment or, conversely, a component may behave incorrectly for inputs that never appear when used within the evaluated system.
Both cases can also be considered more gradually: It may be the case that inputs leading to incorrect behavior are very rare or extremely frequent.
This demonstrates the necessity for analyses which reach across multiple components and for more gradual, quantitative assessments.
There are some non-binary measures of software reliability based on source code in the literature like test coverage.
However, such approaches often do not provide guarantees on the system's reliability for concrete usage scenarios.
Many quantitative source code analyses only consider individual components, in which case it remains unclear how to combine the metrics.
Related work in the architectural domain, such as by Brosch et al.~\cite{brosch-12}, computes reliability measures based on concrete usage scenarios, the system structure, and abstract behavioral specifications.
However, the reliability-relevant information is only estimated by an expert.

Our work proposes to couple a quantitative analysis of the architecture model with a formal analysis of source code.
By combining these analyses, we can reason about larger systems while still providing fine-grained, quantitative feedback on reliability.
In addition, the coupling enables the usage of implementation details in the architectural analysis rather than relying on assumptions by an software architect~\cite{SchulzReicheHahner2022_1000143320}.

\paragraph{Contribution}

We introduce \QAC{} (``\emph{Qu}antifying \emph{A}rchitecture and \emph{C}ode''), an approach to quantitatively determine the probability that a piece 
of software respects its contract by combining architecture modeling and formal source-code analysis.
\QAC{} answers the following query: \emph{``With
what probability does a typical usage of the system not lead to an error?''}
By \emph{error}, we mean any program state contrary to a method's contract.
Specifically, an error occurs whenever a method terminates in a state violating its postcondition and whenever it calls another method in a state violating that method's precondition.
To describe these states, we use \emph{coverage regions} for every method: a coverage region is a formula describing inputs for which the current method is proven to behave correctly.
All coverage regions are inserted into an architectural model. Based on this model and a usage profile describing the typical system usage, we use model counting to calculate a \emph{coverage probability}, which is an under-approximation of
the \emph{correctness probability} that the overall program behaves correctly.

To find coverage regions, we consider several approaches, which makes \QAC{} applicable both in the early and late stages of software development.
In the early stages, they are estimated by
the developers. These estimations are later replaced by testing and verification
results, which are the main approaches considered in this paper.
In operation, the usage profile and coverage regions can be further updated by information monitored at run time.
This makes \QAC{}
incremental: Information gathered in different phases leads to more precise modeling and thus to a better risk assessment.
In addition, it allows developers to combine static verification approaches, which are powerful but expensive -- especially when aiming for 100\% proof coverage -- with run-time verification and testing approaches in a rigorous way.

We formally introduce the \QAC{} approach and show that
it is sound, i.e., it never overestimates the
correctness probability.
We provide a concrete implementation using the Palladio Component Model (PCM)~\cite{PalladioBook2016},
a metamodel for component-based software architectures and usage profiles,
\KeY~\cite{KeYBook2016}, an interactive theorem prover, and several probabilistic model counters.
Finally, we demonstrate its usability in a case study.

\paragraph{Limitations}
\QAC{} can be applied to programs modeled as a service-oriented, component-based architecture.
We require an association between source-code elements (methods and classes) and architectural elements (services and components).
In addition, the logic in which the coverage regions are expressed is restricted by the analysis tools (e.g., \KeY) and the model counter. 
So far, \QAC{} can be used for safety properties, i.e., the probability of certain errors given certain input distributions. It is not suitable for other properties like liveness.
The implementation
only supports synchronous calls and the execution of a single usage profile. It does not support multi-agent analyses.
It is also limited to programs without unbounded loops or recursion. We plan to extend it to allow support for more general
control flow structures both in the service models and in the extraction
of coverage regions.

\section{Preliminaries}
\label{sec:prelim}

\paragraph{Architecture and code}
We combine fine-grained analyses on the source-code level with coarse-grained analyses on the architectural level.
On the source-code level, we have an object-oriented program consisting of \emph{classes} which provide \emph{methods}. On the architectural level, we have a service-oriented, component-based architecture consisting of \emph{components} which provide \emph{services}.
We assume a mapping that maps every component to a class and every service to a method (not necessarily vice versa).
Component-based behavioral specifications provide, in general, no information about the inner state of a component, i.e., state variables. 
Every service or method also has a contract consisting of two formulas $(\pre,\post)$, where $\pre$ is the precondition, which must hold whenever the service is called, and $\post$ is the postcondition, which must hold whenever the service terminates.
In addition to the structure of the software, we also have \emph{behavioral specifications}, which are rough approximations of the internal
behavior of a given service.
Such specifications include, for example, control flow constructs, like branches or loops, and calls to other services.
We assume that every behavioral specification in the architecture is respected by the implementation of the corresponding method.
This consistency can be achieved manually, by applying consistency preservation approaches like the one by Monschein et al.~\cite{Monschein21} or Vitruvius~\cite{KLARE2021110815}, or by inferring the behavioral specifications from the source code.

\paragraph{Software modeling with the Palladio Component Model}
We build upon service-oriented, component-based architectures,
which can, e.g., be modeled in the Palladio Component Model
(PCM)~\cite{PalladioBook2016}. In the PCM, architectures consist of independent
components, which provide services to the user or to other components.
A component must declare all services from other components that it requires.
Two components are connected if one component provides a service required by the other one.
Therefore, a service can be called from outside a component; it takes a specified number of parameters and returns a result. Service calls are always synchronous.
For every service, a behavioral specification is provided in form of Service Effect Specifications (SEFFs).
A SEFF contains, among others, nodes for control structures like loops and branches, the usage of parameters in the behavior, external call actions modeling calls to other services and the description of results from external calls and their processing.
Conditions in control-flow structures are defined by expressions over values provided by local variables, the parameters and other arbitrary specifications.
This SEFF is an abstraction of the source code that actually implements the service: It exactly specifies the service's behavior w.r.t.\ calls to other services and modifications of the component state, but the implementation may contain additional computations or optimizations not represented in the SEFF.
We also use the PCM's usage profile, which describes the usage scenario of the
system in form of a flow chart and calls of the public services. It also specifies probabilistic values to model the
unknown user input. In essence, usage profiles capture the observation of how the system is or should be used.
These probabilistic values are provided either from an expert of the systems domain from prior projects (educated guess) or by measurements~\cite{PalladioBook2016}. 
Measurements can be taken from prior versions of the product in an evolution scenario or in early versions taken from a test group.
The described system behavior under a usage profile is a computation tree of
service calls assembled by the execution of the usage profile and the directly or
indirectly called SEFFs and the applied control structures. Because of the probabilistic usage profile, each
computation path in the tree has an associated  probability. 

\paragraph{Source code analysis}
The most important tool we use to analyze the source code is \KeY, a deductive verification tool used to prove Java programs specified in the Java Modeling Language  (JML)~\cite{JMLrefman}. \KeY{} is semi-automatic, meaning that most JML specifications can be proven automatically, but for more complex cases, the proof can be manually inspected and guided by a human verifier.

We define the necessary concepts of source code analysis through the lens of dynamic logic (DL).
A detailed overview of DL is given by Harel et al.~\cite{Harel79,harel2001dynamic}.
Generic first-order DL is a multi-modal logic which extends first-order logic (FOL) with programs that describe possible state transitions.
To assert that some postcondition $\post$ holds after execution of a program $\alpha$, we use the box modality: $\left[\alpha\right]\post$.
Using the usual logic operators from FOL, we can then specify contracts, such as the following, which asserts that, if the variable $x$ starts with the value $42$, then it will always be larger than $42$ after we run the program in brackets for any positive $a$:
$
x = 42 \land a > 0 \rightarrow \left[ x \coloneqq x+a \right] x > 42
$.
Formulas in DL are evaluated using Kripke structures with state transitions where each state $\sigma \in \mathcal{S}$ contains (among other things) a first-order logic structure assigning each variable a value of its respective domain.
We use $\sigma \vDash \rho$ to denote that a DL formula $\rho$ holds in $\sigma\in\mathcal{S}$ and $\vDash \rho$ to denote that a formula is valid, i.e., holds in all states.

There are many instances of DL, including Java Dynamic Logic (JavaDL)~\cite[Ch. 3]{KeYBook2016}, which is implemented in \KeY{} to deductively verify Java programs specified by JML contracts.
\KeY{}'s calculus operates on \emph{sequents} of the form $\phi \Longrightarrow \psi$, where $\phi$ is the \emph{antecedent} and consists of $n$ formulas $\phi_i$ while $\psi$ is the \emph{succedent} and consists of $m$ formulas $\psi_j$.
A sequent is satisfied by a state $\sigma \in \mathcal{S}$ iff $\sigma \vDash \bigwedge_{i} \phi_i \rightarrow \bigvee_{j} \psi_j$.
A sequent satisfied by all states is called valid.
A formula $\rho$ can be proven true by applying axioms and rules that construct a proof tree: The root is the sequent $\Longrightarrow \rho$, where the succedent contains the formula to prove and the antecedent is empty.
By applying a rule to a node $C$, we obtain several child nodes $P_1,\ldots,P_n$ such that the validity of all sequents $P_i$ together implies the validity of the sequent $C$.
Rules may be \emph{locally} or \emph{globally sound}~\cite{KeYCoverage}:
A rule is \emph{globally sound} if the validity of all $P_i$ implies the validity of $C$ and \emph{locally sound} if every state satisfying all $P_i$ also satisfies $C$.
The tree's leaves are called goals and may be closed -- i.e., tautologies -- or open -- i.e., yet to be proven.

We have defined an error as occurring whenever a method terminates without satisfying its postcondition or whenever it calls another method while violating that method's precondition. To handle the second case and to make our analysis modular, we use a contract rule: a sequent like $\phi \Longrightarrow [\mathrm{foo}()] \psi $ is only provable through the two premises $\phi \Longrightarrow \pre$ and $\phi,\post \Longrightarrow \psi$ where $(\pre,\post)$ is foo's contract.

In addition to potentially slow, but powerful formal verification tools like \KeY, we can also use methods like testing, monitoring or expert estimates.
While such methods may be faster and give a good first approximation, their use in \QAC{} is generally unsound.

\section{Theoretical Overview of \QAC{}}
\label{sec:overview}

\begin{figure*}[tb]
  \centering
  \resizebox{.7\textwidth}{!}{
    \begin{tikzpicture}[inner sep=12pt]
      \node[draw,text width=7em,text centered,very thick]%
      (PCM) {\bfseries Architectural\\Model};

      \node[draw,left=of PCM,text width=4em,text centered,xshift=0.3cm]%
      (K) {Coverage Regions};

      \node[draw,rectangle,left=5 of PCM,text width=6.5em]%
      (J) {Source Code \&\\Contracts};

      \draw[->] ([yshift=3mm]J.east) -- node[yshift=-3mm,above] {Verification} ([yshift=3mm]K.west);

      \draw[->] ([yshift=-3mm]J.east) -- node[yshift=3mm,below] {Test} ([yshift=-3mm]K.west);
      Architecture modeling with Palladio

      \node[draw=none,below=of PCM,text width=5em,text centered] (E) {
        \tikz{\umlactor[scale=2]{User}}
      };

      \draw[->] ([yshift=-3mm]J) |- node[xshift=3mm,yshift=3mm,below] {Review \&
        Estimation} (E);

      \draw[->] (E) -- node[align=right,left,text width=6em]{Modeling\\Architecture} (PCM);

      \node[draw,right=of PCM, text width=5em,text centered] (DS) {Probabilistic\\Model};

      \node[draw,rectangle,below=2 of DS,text width=5em]%
      (R) {Reliability\\Estimation};

      \draw[->] (K) -- (PCM);

      \draw[->] (PCM) -- (DS);
      \draw[->] (DS) -- node[yshift=-3mm,align=right,left,text width=5em]{Model Checker} (R);
    \end{tikzpicture}
  }
  \caption{An overview of the \QAC{} workflow.}
  \label{fig:pipeline}
\end{figure*}
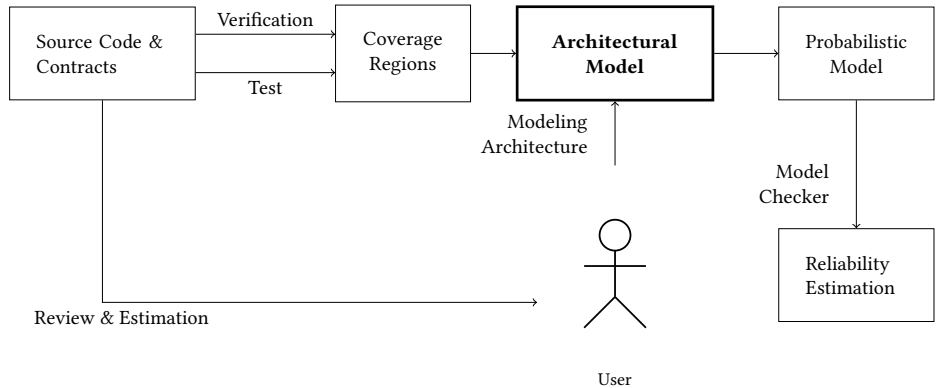

\begin{figure*}
\centering
\resizebox{.7\textwidth}{!}{
\begin{tikzpicture}
\umlclass[x=-4]{WindTurbine}{}{produce(int windSpeed);}
\umlclass[x=0]{Network}{int load;}{addLoad(int n);\\useLoad(int n);}
\umlclass[x=4]{Consumer}{}{consume(int demand);}

\umlassoc[]{WindTurbine}{Network}
\umlassoc[]{Network}{Consumer}

\umlnote[x=-4,y=-3,width=20em]{WindTurbine}{
\texttt{%
void produce(int windSpeed) \{\\
\ \ if (windSpeed < 9)\\
\ \ \ \ debuglog("producing");\\
\ \ \ \ network.addLoad(windSpeed*3/4);\\
\}
}
}
\umlnote[x=4,y=-3,width=15em]{Consumer}{
\texttt{%
void consume(int demand) \{\\
\ \ debuglog("consuming");\\
\ \ network.useLoad(demand);\\
\}
}
}
\end{tikzpicture}
}
  \caption{An example software architecture and implementation.}
  \label{fig:example}
\end{figure*}
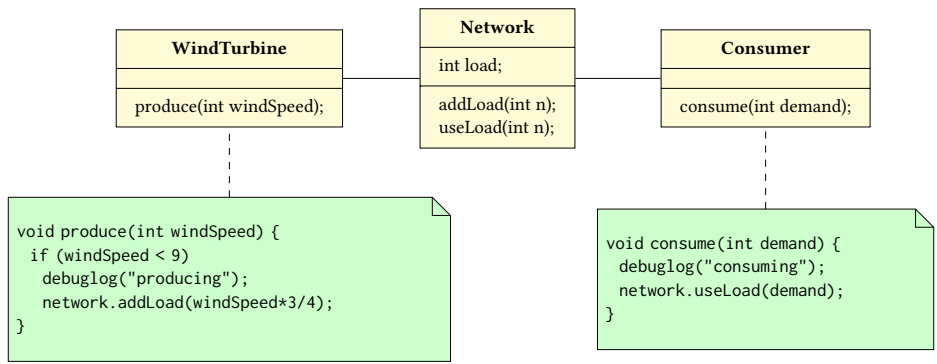

Figure~\ref{fig:pipeline} illustrates our approach. The user-defined
architectural model is in the center, defining the software \emph{components} and \emph{services}.
Every service is modeled by a behavioral specification. In addition, the component model contains a usage profile, which gives a distribution for the initial state and tells us what services are called in the usage of the system.
On the left, we have the source code, which we assume is consistent with the behavioral specifications, and which includes a contract for every method. After calculating coverage regions, we add them to the service models.  
The extended service models and the usage profile are then translated into a probabilistic model, from which a model checker computes the \emph{coverage probability}, i.e., the probability that no service is ever called with an input outside its coverage region. If all coverage regions are correct, the coverage probability is less than or equal to the \emph{correctness probability}, i.e., the probability that the implementation never violates its specification.

\Cref{sec:modeling} defines the parts of our architecture model in more detail.
\Cref{sec:soundness} then explains how we compute the coverage probability from the architecture model on a theoretical level and proves that the coverage probability always under-approximates the correctness probability.
\Cref{sec:key} explains how coverage regions are computed using deductive verification or testing.
\Cref{sec:synthesis} explains how the computation of the coverage probability is implemented.

\newcommand\cov{\mathit{cov}}
\newcommand\body{\mathit{beh}}
\newcommand\Fml{\mathit{Fml}}
\newcommand\Term{\mathit{Term}}
\newcommand\cond{\mathit{cond}}
\newcommand\service{\mathit{s}}

\subsection{Modeling the System}
\label{sec:modeling}

\paragraph{Modeling a service}
We model a service by two parts: its \emph{coverage region}, a formula describing the inputs for which its implementation is proven to behave correctly, and its \emph{behavioral specification}, which is an abstraction of the implementation describing how the service modifies the component state and what other services it calls.
In addition, we assume that every service has a contract ($\pre$,$\post$).
\begin{definition}[Service model]
\label{def:service}
Given a service $s$, let $\Sigma$ be a signature which includes the parameters of $s$ and every component's state variables.
Let $\Fml_\Sigma$ be the set of all quantifier-free first-order formulas and $\Term_\Sigma$ the set of all terms over $\Sigma$.

A service model is a tuple $(\cov, \body)$, where $\cov \in \Fml_\Sigma$ is a coverage region and $\body$ is a sequence of statements: every statement is either a conditional assignment $\mathtt{if}\ (\mathit{fml})\ \mathit{var} = \mathit{trm}$ assigning the value of $\mathit{trm} \in \Term_\Sigma$ to a variable $\mathit{var} \in \Sigma$ of the same type if $\mathit{fml} \in \Fml_\Sigma$ holds, a conditional service call $\mathtt{if}\ (\mathit{fml})\ \mathit{var} = f(\mathit{trm}_1,\ldots,\mathit{trm}_n)$ calling the service $f$ with the well-typed parameters $\mathit{trm}_i$ and assigning the result to $\mathit{var}$ if $\mathit{fml}$ holds, or a conditional termination statement $\mathtt{if}\ (\mathit{fml})\ \lightning$ that terminates the service prematurely.
\end{definition}

\paragraph{Coverage regions}
A coverage region $\cov\in \Fml_\Sigma$ describes the input parameters and component state under which a service's implementation is proven to behave correctly, i.e., not terminate in a state that violates the postcondition nor call another method in a state that violates the callee's precondition.
It is determined by validating the method that implements the service separately from all other methods using testing, formal methods, or run-time monitoring. 
For services which are not yet implemented in an early development stage, the coverage region can also be estimated to obtain a what-if analysis.
A coverage region does not necessarily represent the set of initial states for which the service behaves correctly, but the possibly smaller set of initial states for which we can prove that this is the case.
For the soundness in \Cref{def:systemcorrectness}, we must require that every coverage region be correct, i.e., that no input inside the coverage region lead to an error.
\begin{definition}[Errors and correctness regions]
  \label{def:error}
  For a given method $f$ and its postcondition $\post$, an error is either of the following: a terminal program state for $f$ in which $\post$ does not hold, or a program state in which $f$ calls another method $g$ with precondition $\pre$ in which $\pre$ does not hold.

  We define the method's correctness region $\mathit{correct}$ to be the formula which holds exactly in those initial states in which executing $f$ does not lead to an error.
\end{definition}
\begin{definition}[Correct coverage regions]
  \label{def:correcteregion}
  For a given method $f$ and its contract $(\pre, \post)$, a coverage region $\cov$ is correct
  iff $\cov \rightarrow \mathit{correct}$.
\end{definition}
Since a coverage region describes the inputs for which a service's implementation is proven to behave correctly, any coverage region computed by a sound proof calculus is correct.
While any correct coverage region suffices to make our approach sound, larger regions (i.e., weaker formulas) make it less precise by decreasing the coverage probability. The goal is thus to find maximal correct coverage regions.

\paragraph{Behavioral specifications}
The behavioral specification $\body$ of a service model $\service$ is a sequence of statements according to \Cref{def:service}.

The implementation may be more complex than $\body$ and contain additional statements not represented in $\body$. However, the implementation's behavior w.r.t.\ calls to other services and changes to state variables must be specified exactly. This is because, e.g., a spurious call in a service model may both decrease (if it leads to an error) and increase (if it changes the component state in way that prevents a future error) the coverage probability.

This requirement is achievable with existing methodologies:
In model-driven development, behavioral specifications are created first and then source code is generated from them, being kept consistent with the specification by some consistency preservation approach.
However, one can also go the other way around and infer behavioral specifications from source code.

\paragraph{Example}
From now, we will use the architecture and source code from \Cref{fig:example} as a running example. It implements a heavily simplified simulation of an energy network, consisting of three components that offer four services in total. The $\mathrm{produce}()$ service produces electricity based on the current wind speed; if the wind speed is too high, the service produces no electricity. The $\mathrm{consume}()$ service consumes electricity. And the $\mathrm{addLoad}()$ and $\mathrm{useLoad()}$ services are called by the other two services to modify the current network load.
We assume that $\mathrm{addLoad}()$ and $\mathrm{useLoad}()$ are both specified by the contract $\pre=\post=(\mathrm{load} \geq 0)$ and the other two services by $\pre=(\mathrm{windSpeed} \geq 0),\post=\mathit{true}$ and $\pre=(\mathrm{demand} \geq 0),\post=\mathit{true}$ respectively. In other words, an error occurs if the network load falls below zero.

Since the source code is a refinement of the architecture, it may contain elements not present in the architecture. In our example, the calls to the debug log are present in the source code, but are not a part of the behavioral specifications.

The maximal coverage region for $\mathrm{useLoad}()$ is $n \leq \mathrm{load}$.
For all other services, the maximal coverage region is $\mathit{true}$. This holds even though $\mathrm{consume}()$ may indirectly cause an error by calling $\mathrm{useLoad}()$ because coverage regions only consider errors caused by the service directly.

\subsection{Approximating the Correctness Probability}
\label{sec:soundness}

Using the parts of the architecture model introduced in the previous subsection, we can compute the coverage probability $1 - Pr( \lightning \mid \llbracket \mathcal U(S)\rrbracket)$ that executing the service models $S$ under the usage profile $\mathcal U$ does not lead to a premature termination. If all coverage regions in $S$ are correct, this is less than or equal to the system's correctness probability.

\paragraph{Executable Semantics}
We start by defining what it means to execute a service model $(\cov, \body)$: If $\cov$ does not hold, we immediately terminate prematurely. Otherwise, we execute the behavioral specification $\body$.
Note that according to \Cref{def:service}, every service called by $\body$ must either terminate prematurely or return to the caller.

To execute a set of service models $S$, we must also have a usage profile that tells us what services are called and with what values.
This profile models how the system's implementation is used or expected to be used. It is similar to a service model, but also includes probabilistic variable assignments to model probabilistic user actions. In addition, the usage profile may only call a service if its precondition holds.
\begin{definition}[Usage profile]
    A usage profile $\mathcal U$ is a sequence of statements. Every statement is either a conditional assignment, service call, or termination statement as in \Cref{def:service}, or a conditional probabilistic assignment $\mathtt{if}\ (\mathit{fml})\ \mathit{var} \sim \mathit{dist}$ where $\mathit{dist}$ is a distribution over the domain of the type of $\mathit{var}$.  The condition of every conditional service call in $\mathcal U$ must imply the service's precondition.
\end{definition}

\begin{definition}[Semantics of usage profile]
    Let $\mathcal U$ be a usage profile and $S$ a set of service models. Then  
    $\llbracket \mathcal U(S) \rrbracket$ denotes the set of all finite computation traces which are created by executing $\mathcal U$ and $S$.
\end{definition}
Each trace $t \in \llbracket \mathcal U(S) \rrbracket$ has a probability of occurrence $Pr(t)$ which -- since the service models are deterministic -- is determined solely by the distributions in the probabilistic assignments in $\mathcal U$.
The coverage probability can then be computed as the probability that a trace randomly chosen according to the distributions in the usage profile does not terminate prematurely, i.e., never calls a service outside its coverage region.

Returning to our example from \Cref{fig:example}, we assume a usage profile which calls $\mathrm{produce}(\mathrm{windSpeed})$ followed by $\mathrm{consume}(\mathrm{demand})$ where $\mathrm{windSpeed}$ is an integer uniformly distributed in $[5,9]$ and  $\mathrm{demand}$ is an integer uniformly distributed in $[0,4]$. Then \QAC{} computes a probability of $\frac{4}{5}$, which is equal to the actual correctness probability. This can be seen by considering that of the 25 possible inputs, only the input $\mathrm{windSpeed} = 5, \mathrm{demand} = 4$ and the four inputs $\mathrm{windSpeed} = 9, \mathrm{demand} > 0$ lead to an error.
A smaller coverage region leads to a lower coverage probability. E.g., if we take the coverage region of $\mathrm{useLoad}()$ to be $\mathit{false}$, the probability becomes $0$.

\paragraph{Soundness}
To ensure that the coverage probability un\-der-\allowbreak{}ap\-prox\-i\-mates the correctness probability, it suffices to show that all coverage regions are correct.
\begin{theorem}
\label{def:systemcorrectness}
Let $\mathcal U$ a usage profile.
Let $S, S'$ be sets of service models s.t.\ for each service $\service$ the coverage region in $S$ for $\service$ is smaller or equal to the corresponding one in $S'$. Then 
$
   Pr( \lightning \mid \llbracket \mathcal U(S)\rrbracket)
    \geq Pr(\lightning\mid \llbracket \mathcal U(S')\rrbracket),
$
where $Pr( \lightning \mid \llbracket \mathcal U(S)\rrbracket)$ expresses the probability that by executing the usage profile $\mathcal U$ using service models $S$, we terminate prematurely.
\end{theorem}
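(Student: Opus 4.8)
The plan is to reduce the probabilistic inequality to a deterministic event inclusion by coupling the two executions on shared randomness. Since the only source of randomness is the probabilistic assignments in $\mathcal{U}$, and both $\mathcal{U}(S)$ and $\mathcal{U}(S')$ run \emph{the same} usage profile with the same distributions, I would fix a common sample $\omega$ of the random draws and run both service-model sets against it. The goal is the inclusion: for every $\omega$, if $\mathcal{U}(S')$ terminates prematurely, then so does $\mathcal{U}(S)$. Integrating this inclusion against the shared probability measure then immediately gives $Pr(\lightning \mid \llbracket \mathcal{U}(S')\rrbracket) \leq Pr(\lightning \mid \llbracket \mathcal{U}(S)\rrbracket)$, which is the claim. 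Here I use that $S$ and $S'$ agree on the behavioral specification $\body$ of every service and differ only in their coverage regions, which I take to be part of the hypothesis. Writing $\cov_S^{s}$ and $\cov_{S'}^{s}$ for the two coverage regions of a service $s$, the ordering assumption reads $\models \cov_S^{s} \rightarrow \cov_{S'}^{s}$.

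The heart of the argument is a \emph{lock-step} lemma, proved by induction on execution steps (the traces are finite, since the models are loop- and recursion-free). I would show that, as long as the $S$-execution has not yet failed a coverage check, both executions visit exactly the same sequence of program points in exactly the same states and make exactly the same control-flow decisions. This holds because (i) the behavioral specifications are identical and deterministic given $\omega$, so assignments, branch conditions, and service calls evolve in the same way; and (ii) at each service entry with current state $\sigma$, the $S$-execution continues iff $\sigma \models \cov_S^{s}$ and the $S'$-execution continues iff $\sigma \models \cov_{S'}^{s}$, and since $\models \cov_S^{s} \rightarrow \cov_{S'}^{s}$, whenever $S$ continues so does $S'$, in the same state. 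In particular, the two runs cannot diverge before the $S$-execution's first failed check, and $S'$ cannot fail strictly before $S$ does: the $S$-execution fails a coverage check no later than the $S'$-execution.

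The event inclusion then follows directly. Suppose $\mathcal{U}(S')$ terminates prematurely on $\omega$. By the lock-step lemma the $S$-execution has already failed a check at or before that point: either it failed earlier (and we are done), or it reaches the very same service entry in the very same state $\sigma$ with $\sigma \not\models \cov_{S'}^{s}$, whence $\sigma \not\models \cov_S^{s}$ by the contrapositive $\models \neg\cov_{S'}^{s} \rightarrow \neg\cov_S^{s}$, so $S$ fails there as well. In both cases $\mathcal{U}(S)$ hits $\lightning$ on $\omega$, establishing $\{\omega : \mathcal{U}(S')\text{ hits }\lightning\} \subseteq \{\omega : \mathcal{U}(S)\text{ hits }\lightning\}$.

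The main obstacle I anticipate is making the coupling measure-theoretically precise: the set of probabilistic assignments actually encountered, and hence the draws consumed, can itself depend on the control flow, so I would formalize $\omega$ as an on-demand supply of independent draws per distribution site, or argue directly on the trace probabilities $Pr(t)$, which by assumption are determined solely by the distributions in $\mathcal{U}$. I would also need to note that after the $S$-execution's first failed check the two runs may genuinely diverge; this is harmless, since $\lightning$ is absorbing and the inclusion only requires the direction ``$S'$ fails $\Rightarrow S$ fails.'' The remaining pieces, namely the step-by-step evolution of the state under $\body$ and the equivalence of the two forms of the implication, are routine.
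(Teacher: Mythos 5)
Your proposal is correct and follows essentially the same route as the paper: the paper's observation that determinism of the service models induces an isomorphism $\llbracket \mathcal U(S)\rrbracket \simeq \llbracket \mathcal U(S')\rrbracket$ is exactly your coupling on shared randomness, and its final claim (that if $t'$ calls a service outside its coverage region then $t$ either does the same or has already terminated) is precisely your lock-step lemma combined with the contrapositive of $\cov_S^{s} \rightarrow \cov_{S'}^{s}$. Your version merely spells out the induction and the measure-theoretic bookkeeping that the paper leaves implicit.
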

\begin{proof}
  The probability $Pr( \lightning \mid \llbracket \mathcal U(S)\rrbracket)$ is the sum of probabilities of every prematurely terminating trace: $\sum_{t\in  \llbracket \mathcal U(S)\rrbracket} Pr( \lightning \mid t)\,Pr(t \mid \llbracket \mathcal U(S)\rrbracket)$.
  Note that the service models are deterministic and the only probabilistic choices are in the usage
  model $\mathcal U$. Thus, there is an isomorphism  $\llbracket \mathcal U(S)\rrbracket \simeq \llbracket \mathcal U(S')\rrbracket$.
  Assume $t \in \llbracket \mathcal U(S)\rrbracket$ and a corresponding $t' \in \llbracket \mathcal U(S')\rrbracket$. 
  For a single $t$, the probability $Pr( \lightning \mid t)$ is either 0 or 1.
  Thus, we need to show that if $t'$ terminates prematurely, then so does $t$. 
  Due to the assumption that each coverage region in $S$ is smaller than or equal to the corresponding one in $S'$, we know that if $t'$ hits calls a service $s$ outside its coverage region, then either the same service $s$ is also called outside its coverage region in $t$, or $t$ already terminates before $s$ is called.
\end{proof}

This theorem tells us that for any service model $S$ consisting only of correct coverage regions, $Pr( \lightning \mid \llbracket \mathcal U(S)\rrbracket)$ over-\allowbreak{}ap\-prox\-i\-mates the actual error probability, and thus $1 - Pr( \lightning \mid \llbracket \mathcal U(S)\rrbracket)$ un\-der-\allowbreak{}ap\-prox\-i\-mates the actual correctness probability of the implementation.

\paragraph{Guiding with quantitative values}
To apply \QAC{}, we must specify and verify all components and all services they provide. 
However, formally specifying and verifying every service is very labor-intensive.
Instead, we can sacrifice soundness for practicability by using \QAC{} to calculate the probability of a certain service being called and ignoring all rarely-called services in our analysis.

Furthermore, the severity of different kinds of errors can be considered:
For example, logging is pervasive; each service may invoke the logging service to trace the data processing. But an error during logging may be uncritical if it does not influence other parts of the program.
To address this, we can manually set the coverage regions of uninteresting services to $\mathit{true}$ or even remove these service from the model entirely (like we did with the debuglog() in our running example) to focus only on those errors which interest us.
Alternatively, we can extend the architecture model with 
a new attribute which reflects the cost of an error in a service and instrument the probabilistic model generated by \QAC{} such that it approximates the expected error cost instead of the correctness probability.

\section{Computation of Coverage Regions}
\label{sec:key}
This section explains how coverage regions can be computed.
As mentioned before, we can use verification tools like \KeY{} or testing approaches, which, while faster, give incorrect results.

\paragraph{Using formal verification to compute coverage regions}

Consider a JML contract $(\pre,\post)$.
Beckert et al.~\cite{KeYCoverage} introduce the notion of \emph{state space coverage}.
Given a partially open proof, this is the set of initial states for which we know that the method will satisfy its postcondition.
Assuming only locally sound proof rules were applied, this set consists of all entry state in which all open goals hold. Thus, the set of open goals induces a coverage region, as formalized by the following theorem and corollary.

\begin{theorem}[Local Contract Satisfaction]
\label{thm:key:local-sat}
Let $\rho = \pre \rightarrow \left[ \service \right] \post$ be a JavaDL proof obligation for some contract.
Let $\left(\mathscr{O},\mathscr{C}\right)$ be the open and closed goals of an unfinished proof produced through locally sound rules%
.
If a state $\sigma$ satisfies all open goals, i.e., $\sigma \vDash \bigwedge_{\left( \phi \Longrightarrow \psi \right) \in \mathscr{O}} \left( \bigwedge\limits_{i=1}^n \phi_i \rightarrow \bigvee\limits_{j=1}^m \psi_j \right)$, then $\sigma \vDash \rho$.
\end{theorem}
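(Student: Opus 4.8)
The plan is to prove a slightly stronger statement by structural induction over the proof tree and then instantiate it at the root. For every node $C$ of the tree I would let $\mathscr{O}_C$ denote the set of open goals lying in the subtree rooted at $C$, and establish the claim: \emph{if $\sigma$ satisfies every sequent in $\mathscr{O}_C$, then $\sigma$ satisfies the sequent $C$.} The theorem then follows by taking $C$ to be the root $\Longrightarrow \rho$. Its subtree is the entire proof, so $\mathscr{O}_C = \mathscr{O}$, and satisfying the empty-antecedent sequent $\Longrightarrow \rho$ is by definition the same as $\sigma \vDash \rho$. Strengthening the statement to range over arbitrary subtrees (rather than stating it only for the root) is what makes the induction go through.

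For the base case, a node $C$ is a leaf, hence either open or closed. If $C$ is open, then $C \in \mathscr{O}_C$, so the hypothesis immediately gives $\sigma \vDash C$. If $C$ is closed, then $C$ is a tautology, i.e., a valid sequent satisfied by all states; in particular $\sigma \vDash C$, and here the hypothesis on $\mathscr{O}_C$ is not even needed. For the inductive step, let $C$ be an internal node whose children $P_1,\ldots,P_n$ were produced by applying a single rule to $C$. From the tree structure the open goals decompose as $\mathscr{O}_C = \bigcup_{i=1}^n \mathscr{O}_{P_i}$, so the assumption that $\sigma$ satisfies all of $\mathscr{O}_C$ implies that $\sigma$ satisfies all of $\mathscr{O}_{P_i}$ for each $i$. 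The induction hypothesis then yields $\sigma \vDash P_i$ for every $i$. Since the applied rule is locally sound, every state satisfying all premises $P_1,\ldots,P_n$ also satisfies the conclusion $C$; applying this to $\sigma$ gives $\sigma \vDash C$, completing the induction.

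The one conceptual point to get right — and the reason the hypothesis requires local rather than merely global soundness — is that the argument must transport satisfaction by the \emph{single fixed} state $\sigma$ upward through the tree. Global soundness only propagates validity (satisfaction in all states simultaneously) and would be useless when reasoning about one particular $\sigma$; local soundness is precisely the per-state implication ``$\sigma$ satisfies all premises $\Rightarrow$ $\sigma$ satisfies the conclusion'' that the inductive step consumes. Beyond this, the proof is routine: the decomposition $\mathscr{O}_C = \bigcup_i \mathscr{O}_{P_i}$ is immediate from the definition of a proof tree, and the two leaf cases follow directly from the definitions of open and closed goals. This theorem is what licenses reading off a correct coverage region $\cov$ for $\service$ from the open goals of a partial \KeY{} proof.
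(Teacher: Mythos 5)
Your proof is correct and takes essentially the same route as the paper: the paper's one-line ``thus'' is exactly the induction over the proof tree via local soundness that you spell out, and the only cosmetic difference is that you discharge the closed goals inside the base case (as tautologies) whereas the paper first propagates satisfaction of \emph{all} leaves to the root and then drops the closed goals by their validity at the end. Your explicit remark on why local (rather than global) soundness is needed matches the paper's framing as well.
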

\begin{proof}
Remember that for a locally sound proof rule, any state satisfying all premises also satisfies the conclusion. Thus, any state $\sigma$ which satisfies the conjunction of all open and closed goals $\bigwedge_{\left( \phi \Longrightarrow \psi \right) \in \mathscr{O} \cup \mathscr{C}} \left( \bigwedge\limits_{i=1}^n \phi_i \rightarrow \bigvee\limits_{j=1}^m \psi_j \right)$ satisfies $\rho$.
Since all closed goals are (universally) valid, any state $\sigma$ which satisfies the conjunction of all open goals only
also satisfies $\rho$.
\end{proof}
 
This result directly induces a correct coverage region:

\begin{corollary}[Open Branches as Correct Coverage Region]
\label{cor:key:erroneous}
Let $\rho$ and $\left(\mathscr{O},\mathscr{C}\right)$ be as before.
The following formula is an correct coverage region for $\rho$:
\[
\cov \equiv \bigwedge\limits_{\left( \phi \Longrightarrow \psi\right) \in \mathscr{O}}
\left(\bigvee\limits_{i=1}^n \lnot\phi_i \lor \bigvee\limits_{j=1}^m \psi_j\right).
\]
\end{corollary}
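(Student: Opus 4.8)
The plan is to reduce the claim to \Cref{thm:key:local-sat} by recognizing that $\cov$ is precisely the negation-normal form of the satisfaction condition for the open goals used there. First I would rewrite a single conjunct: for an open goal $\phi \Longrightarrow \psi$ with $\phi = (\phi_1,\dots,\phi_n)$ and $\psi = (\psi_1,\dots,\psi_m)$, the disjunction $\bigvee_{i}\lnot\phi_i \lor \bigvee_{j}\psi_j$ is logically equivalent to $\bigwedge_{i}\phi_i \rightarrow \bigvee_{j}\psi_j$, which is exactly the formula whose truth in $\sigma$ encodes that $\sigma$ satisfies the sequent $\phi \Longrightarrow \psi$. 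Conjoining over all of $\mathscr{O}$, the formula $\cov$ is therefore semantically identical to the hypothesis of \Cref{thm:key:local-sat}, namely $\sigma \vDash \bigwedge_{(\phi \Longrightarrow \psi)\in\mathscr{O}}\left(\bigwedge_{i}\phi_i \rightarrow \bigvee_{j}\psi_j\right)$.

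With this identification the second step is a direct appeal to \Cref{thm:key:local-sat}: every state $\sigma$ with $\sigma \vDash \cov$ satisfies all open goals, hence $\sigma \vDash \rho$ for $\rho = \pre \rightarrow \left[\service\right]\post$. It then remains to convert the validity of $\rho$ in $\sigma$ into the statement that executing $\service$ from $\sigma$ leads to no error, so that $\cov \rightarrow \mathit{correct}$ holds and $\cov$ is a correct coverage region by \Cref{def:correcteregion}.

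This last step is where the real work lies, because $\rho$ visibly mentions only the postcondition through the box modality, whereas an error in the sense of \Cref{def:error} comes in two kinds. Termination in a state violating the postcondition is handled directly by $\left[\service\right]\post$ under the guard $\pre$. The second kind -- calling a method $g$ in a state violating $g$'s precondition -- is not captured by the modality itself but by the premises $\phi \Longrightarrow \pre_g$ that the contract rule introduces at each call site; these premises appear among the goals, so that when $\sigma$ satisfies all open goals (the closed ones being valid) the precondition $\pre_g$ holds along every path reachable from $\sigma$ at that call. I expect the main obstacle to be arguing this cleanly: one must track the symbolic-execution path conditions so that the goal formulas, all expressed over the initial state, genuinely certify precondition satisfaction at the corresponding call, and one must fix the convention for states in which $\pre$ itself fails. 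Such states trivially satisfy every goal, since $\pre$ persists in each antecedent, and therefore satisfy $\cov$; hence $\mathit{correct}$ must be read as vacuously true there, or coverage must only be evaluated where $\pre$ holds. Once these conventions are pinned down, both error types are jointly excluded and the implication $\cov \rightarrow \mathit{correct}$ follows.
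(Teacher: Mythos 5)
Your proposal is correct and takes essentially the same route as the paper, which presents the corollary as a direct consequence of \Cref{thm:key:local-sat}: each conjunct of $\cov$ is the De Morgan rewriting of the satisfaction condition for an open sequent, so $\sigma \vDash \cov$ puts $\sigma$ in the hypothesis of the theorem and hence $\sigma \vDash \rho$. Your additional care about the second error type (precondition violations at call sites) is exactly what the paper delegates to the contract rule described in its preliminaries, whose premises $\phi \Longrightarrow \pre$ appear among the goals and are therefore covered by the same argument.
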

Given such a $\cov$, any $\cov'$ such that $\cov' \rightarrow \cov$ is also a correct coverage region.
In particular, for a $\cov$ in conjunctive normal form, we may remove any atoms within a disjunction that contain variables not represented in the architecture model.

\begin{definition}[Projection]
    Let $\cov \equiv \bigwedge_i \bigvee_j A_{i j}$ be a formula in conjunctive normal form. Let $V$ be a set of variables. The formula obtained by removing all $A_{i j}$ with variables not in $V$ is called the projection of $\cov$ on $V$.
\end{definition}

To summarize, for any given service, we use \KeY{} to (automatically or manually) find a partial proof for the service's implementation. We then construct a coverage region from the partial proof's open goals. Lastly, we make that coverage region usable in the architecture model by projecting away all variables that exist in the implementation but not the model.

\paragraph{Approximate results}
\label{sec:approx}

Testing results can be leveraged in two manners:
Either the complement of failed tests shapes a possibly incorrect coverage region,
or
the successful tests shape a correct, but small coverage region.
Similarly, results from the monitoring of deployed software components are usable to obtain approximate coverage regions.

For components which have not yet been implemented, neither
verification nor testing is possible.
In this case, a domain expert must estimate the coverage
regions.
This is useful as an approximation during development. After the service is implemented, the estimate can be replaced by tests or verification results.

\section{From Coverage Regions to Probabilities}
\label{sec:synthesis}

In this section, we describe how the probabilities to enter a coverage region under a given usage profile are calculated. 
In \Cref{sec:integration}, we define the necessary elements of the Palladio Component Model (PCM) and explain how to integrate the computed coverage regions.
In \cref{sec:calcProb}, we explain how to use the resulting model to calculate the coverage probability.

\subsection{Integration of Coverage Regions in the Architectural Model}
\label{sec:integration}

\QAC{} builds on usage profiles and elements of a component-based architecture
as presented in Sections~\ref{sec:prelim} and~\ref{sec:overview}.

Relevant elements of behavioral specifications for \QAC{} are \begin{inparaenum}[(1)]
    \item branching nodes that specify the branching criteria based on the value of a quantifier-free formula over the values of parameters or by a fixed absolute probability.
    \item bounded loop nodes that specify the number of iterations based on the values of parameters.
    \item external call nodes, which define a call to a service and the provided input for the parameters. These calls may contain a condition over the values of parameters as well as results of other services invocations.  
\end{inparaenum}

To support \QAC{}'s service models in the PCM, it must also model the relevant state of a component using state variables with a name, type, and a probability distribution describing their possible initial values. 
We introduce state variables and setter and getter nodes in architectural behavioral specifications for the introduction of coverage regions.
We also extend all relevant elements which contain some kind of condition to handle values of state variables in addition to parameter values.
We extend the PCM with our required information by applying the \emph{inheritance} and \emph{plain referencing} extension approach, described by Heinrich et al.~\cite{heinrich2021}.
We provide a separate metamodel with classes extending the existing meta-classes of the PCM, e.g., the getter and setter accesses, or elements which reference an existing element, such as the assignment of state variables to components.
\begin{figure}
\begin{subfigure}[b]{0.5\textwidth}
\centering
\scalebox{.9}{
\begin{tikzpicture}[box/.style={fill=blue!15, draw, minimum height=1cm, text width=4cm, align=left},scale=1]
\umlstateinitial[name=Init]

\begin{umlstate}[x=4,y=-.4,name=Call1]{Conditional service call}
\umlstatetext{%
if($\mathrm{windSpeed} < 9$)\\
\qquad{}network.addLoad(windSpeed*3/4)
\vspace{-1cm}
}
\end{umlstate}

\umlstatefinal[anchor=west,right=.75 of Call1.east,name=Final]

\umltrans[]{Init}{Call1}
\umltrans[]{Call1}{Final}

\node[box,anchor=north west] (State) at (4.5,3) {%
\emph{Required components:}\\
network: Network\\
\vspace{-.5em}
\hrulefill\\
\emph{State variables:}\\
\ ---
};

\node[box,anchor=north east,text width=2.5cm] at (4.5,3) (Crit) {%
\emph{Coverage region:}\\
$\mathit{true}$
};

\node[left=2pt of Init.west] (Left) {};
\node[below=2pt of Call1.south] (Bottom) {};

\begin{pgfonlayer}{background}
\node[draw, inner ysep=0pt,inner xsep=0pt, fill=blue!15, fit=(Bottom)(Left)(State)] (around) {};
\end{pgfonlayer}
\end{tikzpicture}
}
\caption{Behavioral specification for produce().}
\label{fig:behavior-specification-produce}
\end{subfigure}
\begin{subfigure}[b]{0.5\textwidth}
\centering
\scalebox{.9}{
\begin{tikzpicture}[box/.style={fill=blue!15, draw, minimum height=1cm, text width=4cm, align=left},scale=1]
\umlstateinitial[name=Init]

\begin{umlstate}[x=3.5,y=-.4,name=Call1]{Conditional service call}
\umlstatetext{%
if(true)\\
\qquad{}network.useLoad(demand)
\vspace{-1cm}
}
\end{umlstate}

\umlstatefinal[anchor=west,right=.75 of Call1.east,name=Final]

\umltrans[]{Init}{Call1}
\umltrans[]{Call1}{Final}

\node[box,anchor=north west] (State) at (3.7,3) {%
\emph{Required components:}\\
network: Network\\
\vspace{-.5em}
\hrulefill\\
\emph{State variables:}\\
\ ---
};

\node[box,anchor=north east,text width=2.5cm] at (3.7,3) (Crit) {%
\emph{Coverage region:}\\
$\mathit{true}$
};

\node[left=2pt of Init.west] (Left) {};
\node[below=2pt of Call1.south] (Bottom) {};

\begin{pgfonlayer}{background}
\node[draw, inner ysep=0pt,inner xsep=0pt, fill=blue!15, fit=(Bottom)(Left)(State)] (around) {};
\end{pgfonlayer}
\end{tikzpicture}
}
\caption{Behavioral specification for consume() .}
\label{fig:behavior-specification-consume}
\end{subfigure}
\begin{subfigure}[b]{0.5\textwidth}
\centering
\scalebox{.9}{
\begin{tikzpicture}[box/.style={fill=blue!15, draw, minimum height=1cm, text width=4cm, align=left},scale=1]
\umlstateinitial[name=Init]

\begin{umlstate}[x=3,y=-.35,name=Call1]{Set}
\umlstatetext{%
load = load + n
\vspace{-1cm}
}
\end{umlstate}

\umlstatefinal[anchor=west,right=.75 of Call1.east,name=Final]

\umltrans[]{Init}{Call1}
\umltrans[]{Call1}{Final}

\node[box,anchor=north west] (State) at (3,3) {%
\emph{Required components:}\\
\ ---\\
\vspace{-.5em}
\hrulefill\\
\emph{State variables:}\\
\ int load (initial value: 0)
};

\node[box,anchor=north east,text width=2.5cm] at (3,3) (Crit) {%
\emph{Coverage region:}\\
$\mathit{true}$
};

\node[left=2pt of Init.west] (Left) {};
\node[below=2pt of Call1.south] (Bottom) {};

\begin{pgfonlayer}{background}
\node[draw, inner ysep=0pt,inner xsep=0pt, fill=blue!15, fit=(Bottom)(Left)(State)] (around) {};
\end{pgfonlayer}
\end{tikzpicture}
}
\caption{Behavioral specification for addLoad().}
\label{fig:behavior-specification-addload}
\end{subfigure}
\begin{subfigure}[b]{0.5\textwidth}
\centering
\scalebox{.9}{
\begin{tikzpicture}[box/.style={fill=blue!15, draw, minimum height=1cm, text width=4cm, align=left},scale=1]
\umlstateinitial[name=Init]

\begin{umlstate}[x=3,y=-.35,name=Call1]{Set}
\umlstatetext{%
load = load - n
\vspace{-1cm}
}
\end{umlstate}

\umlstatefinal[anchor=west,right=.75 of Call1.east,name=Final]

\umltrans[]{Init}{Call1}
\umltrans[]{Call1}{Final}

\node[box,anchor=north west] (State) at (3,3) {%
\emph{Required components:}\\
\ ---\\
\vspace{-.5em}
\hrulefill\\
\emph{State variables:}\\
\ int load (initial value: 0)
};

\node[box,anchor=north east,text width=2.5cm] at (3,3) (Crit) {%
\emph{Coverage region:}\\
$n \leq \mathrm{load}$
};

\node[left=2pt of Init.west] (Left) {};
\node[below=2pt of Call1.south] (Bottom) {};

\begin{pgfonlayer}{background}
\node[draw, inner ysep=0pt,inner xsep=0pt, fill=blue!15, fit=(Bottom)(Left)(State)] (around) {};
\end{pgfonlayer}
\end{tikzpicture}
}
\caption{Behavioral specification for useLoad().}
\label{fig:behavior-specification-useload}
\end{subfigure}
\caption{Behavioral specifications for the services from \Cref{fig:example}.}
\label{fig:behavior-specifications}
\end{figure}
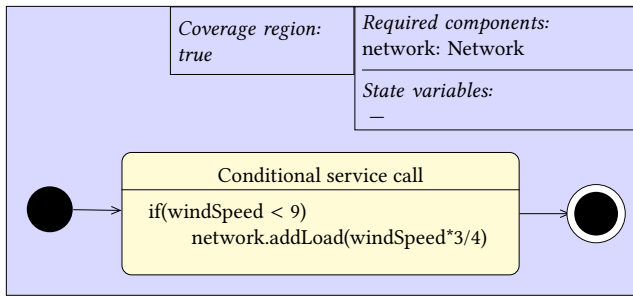
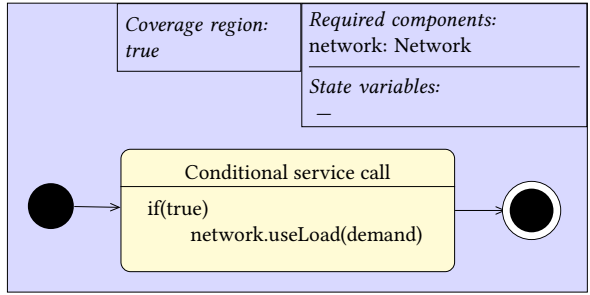
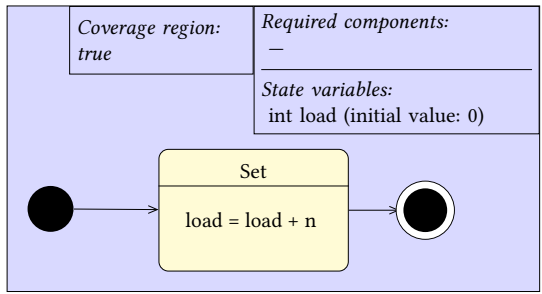
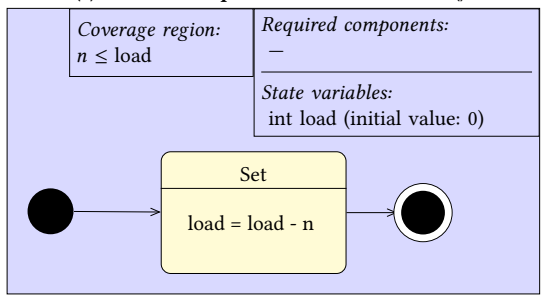
With this extension, we can insert the coverage regions as a projection on the available state variables and parameters in the model into the behavioral specification of a service. 
Conditions for service calls and the symbolic values of the actual arguments can be modeled by developers as part of the behavioral specifications, or computed by \KeY{} in a similar way as the coverage regions. 
The rest of the behavioral specification can be manually modeled or extracted from the source code with reverse engineering approaches, such as the one by Becker et al.~\cite{becker10}.

Figure~\ref{fig:behavior-specifications} depicts the behavioral specifications corresponding to the services from our running example.

\subsection{Calculation of the Probabilities}
\label{sec:calcProb}
Given a usage profile and the architectural model with the information described in \Cref{sec:integration}, we compute the coverage probability.
To this end, we translate each service model into a function in a probabilistic program based on the service model's executable semantics as presented in \Cref{sec:overview}. The usage profile is translated into the main function with which the execution starts.
Calls to services by the usage profile or by other services can be translated into function calls.
Branching nodes and bounded loops are translated to \lstinline|if| or \lstinline|while| statements in the functions with a corresponding condition.

\begin{figure}[t]
\begin{lstlisting}[language=Java,mathescape=true]
int load;

fun main():
  load = 0; windSpeed ~ U(5,9); demand ~ U(0,4);
  if (windSpeed >= 0): produce(windSpeed));
  if (demand >= 0): consume(demand);

fun produce(int windSpeed):
  if (!(true)): $\lightning$;
  if (windSpeed < 9): addLoad(windSpeed * 3 / 4);
  
fun consume(int demand):
  if (!(true)): $\lightning$;
  if (true): useLoad(demand);

fun addLoad(int n):
  if (!(true)): $\lightning$;
  load = load + n;

fun useLoad(int n):
  if (!(n <= load)): $\lightning$;
  load = load - n;
\end{lstlisting}
\caption{Probabilistic program corresponding to \Cref{fig:example}.}
\label{fig:prob-program}
\end{figure}

For example, let us again consider the behavioral specifications in Figure~\ref{fig:behavior-specifications}.
We assume the same usage profile given in \Cref{sec:overview}.
Then the translation into a probabilistic program is shown in \Cref{fig:prob-program}.
Now, a model checker or model counter, e.g., PSI~\cite{PSIsolver}, Storm~\cite{hensel-2022}, or counterSharp~\cite{teuber-2021}, can resolve the probabilities in \Cref{fig:prob-program}.
 
\subsection{Prototypical implementation}

We provide a prototypical implementation of \QAC{} using \KeY{} as the verification tool and the PCM as an architectural description language~\cite{florian_lanzinger_2022_7473812}.
We extend the PCM by introducing state variables, getter and setter actions in the SEFFs as well as elements for the attributes of our service model~(\Cref{def:service}). 

We extract the coverage regions with \KeY{} and project the results onto the state variables and parameters present in the model.
We can also use \KeY{} to extract conditions of service calls.

We implement a transformation from the extended PCM into a probabilistic program in the PSI~\cite{PSIsolver} language similar to the one described in \Cref{sec:calcProb}. Transformations to other model checkers are possible, but not yet implemented.

\section{Case Study}
\label{sec:case-study}
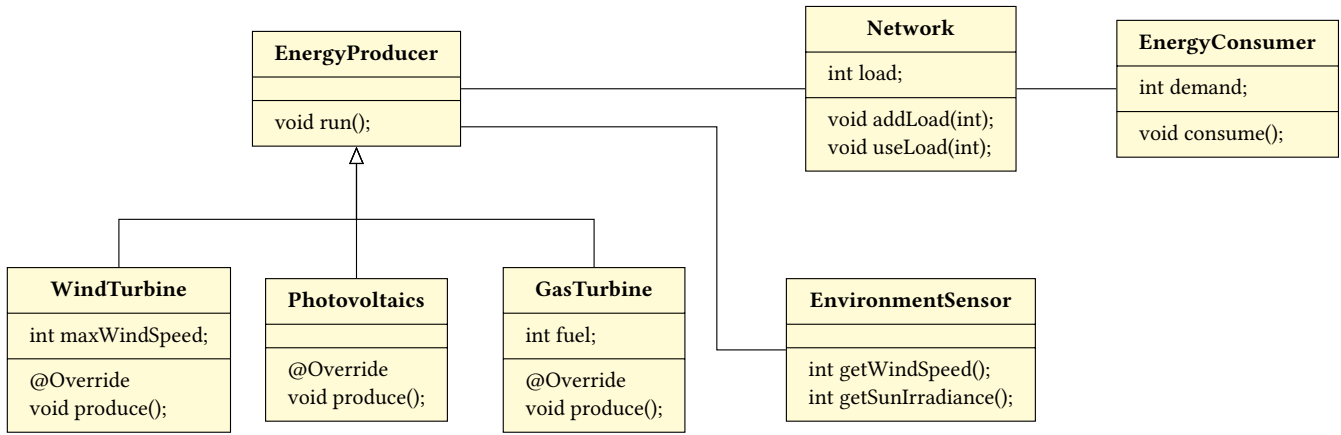
\begin{figure*}[tb]
\centering
\resizebox{1\textwidth}{!}{%
\begin{tikzpicture}[scale=1]
\umlclass[x=0]{EnergyProducer}{}{
  void run();
}
\umlclass[x=-3,y=-3.3]{WindTurbine}{int maxWindSpeed;}{@Override\\void produce();}
\umlclass[x=0,y=-3.3]{Photovoltaics}{}{@Override\\void produce();}
\umlclass[x=3,y=-3.3]{GasTurbine}{int fuel;}{@Override\\void produce();}

\umlinherit[geometry=|-|]{WindTurbine}{EnergyProducer}
\umlinherit[geometry=|-|]{Photovoltaics}{EnergyProducer}
\umlinherit[geometry=|-|]{GasTurbine}{EnergyProducer}

\umlclass[x=7]{Network}{
  int load;
}{
  void addLoad(int); \\
  void useLoad(int);
}

\umlclass[x=11]{EnergyConsumer}{
  int demand;
}{
  void consume();
}

\umlclass[x=7,y=-3.3]{EnvironmentSensor}{}{
  int getWindSpeed();\\
  int getSunIrradiance();
}

\umlassoc[geometry=--]{EnergyProducer}{Network}
\umlassoc[geometry=--]{EnergyConsumer}{Network}
\umlassoc[geometry=-|-,weight=0.65,anchor1=-20]{EnergyProducer}{EnvironmentSensor}
\end{tikzpicture}
}
\caption{Class diagram for the case study.}
\label{fig:casestudy-uml}
\end{figure*}

\begin{figure}[t]
\begin{lstlisting}[language=Java,mathescape=true]
wt.maxWindSpeed = 15; gt.fuel = 2;

repeat ITERATIONS times:
  cons.demand = $D_1$;
  sens.sunIrradiance = $D_1$;
  sens.windSpeed = $D_2$;
  
  wt.produce(); phv.produce();
  if netw.load < cons.demand:
    gt.produce();
  cons.consume();
\end{lstlisting}
\caption{Usage profile for the case study.}
\label{fig:casestudy-usage-profile}
\end{figure}
\paragraph{Structure}
To demonstrate the feasibility of \QAC, we first ran our running example through our implementation and analyzed the resulting probabilistic program with PSI. We obtained the expected result of $\frac{4}{5}$ (see \Cref{sec:overview}). \KeY{} found all coverage regions automatically in 26.3s and PSI computed the probability in 1.3s.\footnote{On average over 3 trials in our artifact VM with 16GB RAM and 1 CPU.}

To test \QAC{}'s scalability, we consider a more complex version of the running example, shown in  \Cref{fig:casestudy-uml}.
Instead of one, we have three energy producers: a wind turbine, a photovoltaic system, and a gas turbine.
In addition, we have an environment sensor, which supplies the current wind speed and sun irradiance.

In our usage profile, shown in \Cref{fig:casestudy-usage-profile}, initial values for the demand, sun irradiance, and wind speed are independently chosen from the random distributions $D_1, D_2$.
Then, both the wt.produce() and phv.produce() services are called and increase the network load based on the current wind speed and sun irradiance respectively. If the wind speed exceeds $15$, the turbine produces no energy.
If the demand is greater than the network load, GasTurbine.produce() may be called as well; this decrements the gas turbine's fuel and increases the network load if there is still fuel left.
Lastly, Consumer.consume() is called and lowers the network load.
All of this is repeated $n$ times. Residual load is carried over into the next iteration.

We consider the following versions of this usage profile: First, we vary the distributions used. In the first version, $D_1 = \mathscr{U}(0,1999), D_2 = \mathscr{U}(0,19)$. In the second version, $D_1 = \mathscr{N}_d(\mu=1000,\sigma=32), D_2 = \mathscr{N}_d(\mu=10,\sigma=3)$ where $\mathscr{N}_d$ is a discretized version of a normal distribution in which all possible values are integers.
For both distributions, we also vary the number of iterations. We consider between 2 to 8 iterations and expect the run time of the model checkers to rise and the coverage probability to sink with the number of iterations.

The system architecture is modeled in the PCM, and there is a SEFF given for every service.
The SEFFs offer a behavioral specification for each service, as explained in \Cref{sec:synthesis}.
In addition, all classes are implemented in Java and specified in JML. Every method implementation is consistent with the corresponding service's SEFF.
The JML specification states that the network load is always non-negative.

\paragraph{Extraction of coverage regions}
We extract the coverage regions (see \Cref{sec:key}) and service call conditions (see \Cref{sec:integration}) from the source code using \KeY{}, then add them to the SEFFs. This yields the coverage region $\mathrm{arg} \leq \mathrm{load}$ for Network.useLoad() and the coverage region $\mathit{true}$ for all other services.%
\footnote{Actually, the coverage regions extracted by \KeY{} have additional conditions that deal with non-nullness, exception freedom, etc., but since these do not interest us, we manually remove them before adding the coverage regions to the SEFFs.}

\paragraph{Computation of the coverage probability}
Having computed the coverage regions, we add them to the SEFFs.
Then, the usage profile and the extended SEFFs are translated into one probabilistic program, as explained in \Cref{sec:calcProb}.

In addition, we manually translated the extended PCM into Prism and checked the Prism program with the Storm model checker.
We also manually translated the extended PCM into a C program that can be checked using the counterSharp approach, which uses model counting in combination with a bounded model checker to quantify the reliability of C programs as the ratio of failing runs to total runs.
As this assumes the initial values to be uniformly distributed, the C program manually converts between a uniform distribution and the distributions shown in our usage profile (\Cref{fig:casestudy-usage-profile}). Unlike PSI and Storm, counterSharp uses approximate model counting, meaning that the result is not necessarily exact, but can be found in much less time.

\paragraph{Results}
\KeY{} ran in 269.3s.\footnote{On average after 3 trials in our artifact VM.}
All instances of our experiments with exact model counters PSI and Storm ran out of memory. This is because creating branches for every conditional service call and coverage region resulted in a very large state space, which is further enlarged by our usage profile containing $3n$ random variables with $(2000 \cdot 2000 \cdot 20)^n$ total values for $n$ loop iterations in the usage profile.

However, using approximate model counting as in the counterSharp approach proved to be practical. As we see in \Cref{tab:results}, although counterSharp shows an exponential run-time increase depending on the number of iterations, its run times are on the low side considering the size of our state space.

We conclude that our approach is feasible and that while exact results can only be achieved for very small programs (like our running example), approximate results with a confidence interval~\cite{Chakraborty-ApproxMC,teuber-2021} can be found in an acceptable run time even for programs with very large state spaces.
The reliability of the intervals can be improved by repeating the computation multiple times.

\newcolumntype{Y}{>{\centering\arraybackslash}X}
\begin{table}[]
    \centering
    \begin{tabularx}{.49\textwidth}{c *{4}{Y}}\toprule
        \multirow{2}{*}{Cycles} & \multicolumn{2}{c}{Run time} & \multicolumn{2}{c}{Coverage probability} \\
        \cmidrule(lr){2-3} \cmidrule(l){4-5}
         &
         $\mathcal{U}$ & $\mathcal{N}_d$ &
         $\mathcal{U}$ & $\mathcal{N}_d$ \\ \midrule
            2 &
             $1.7\sec$ & $6.9\sec$ & 
             $100\%$ & $100\%$ \\ 
            3 &
             $13.5\sec$ & $10.6\sec$ & 
             $\left[99.89\%,99.97\%\right]$ & $100\%$\\ 
            4 &
             $52.5\sec$ & $14.6\sec$ & 
             $\left[99.64\%,99.89\%\right]$ & $100\%$ \\ 
         $\cdots$ & \multicolumn{2}{c}{$\cdots$} & \multicolumn{2}{c}{$\cdots$} \\
            7 &
             $267.8\sec$ & $96.4\sec$ & 
             $\left[97.66\%,99.28\%\right]$ & $100\%$ \\ 
            8 &
             $313.9\sec$ & timeout & 
             $\left[96.58\%,98.95\%\right]$ & timeout \\ 
         \bottomrule
    \end{tabularx}
    \caption{Results of our case study with counterSharp: The actual coverage probability lies in the provided interval with 80\% probability.}
    \label{tab:results}
\end{table}

\section{Related Work}
\label{sec:related-work}

\paragraph{Quantitative and probabilistic program analysis}
Geldenhuys et al.~\cite{geldenhuys-2012} introduce \emph{probabilistic symbolic execution}, where for every branch during symbolic execution, the branch's probability is calculated via model counting and multiplied with the path condition so far. Thus, one obtains a probability for every path.
\QAC{} is also connected to the field of statistical model checking, which applies statistical methods like Monte-Carlo simulation or hypothesis tests to probabilistic models~\cite{YounesKNP06}.
Gerrard et al.~\cite{gerrard-2020} combine over- and under-approximating verification tools to approximate the reachability condition of a given state, resulting in both lower and upper bounds.
\QAC{} differs from these because it is model-based -- i.e., the architectural model we use for the probabilistic analysis is more abstract than the source code -- and modular -- i.e., we can use different analyses for different services.

\paragraph{Model-based Safety and Reliability Analyses}
Different approaches use design models to analyze the safety and reliability of software systems. 
There are approaches that analyze based on design models and their transformations into other formalisms.
Huszerl et al.~\cite{huszerl-2002} transform UML state charts into stochastic reward nets (SRNs) to perform a quantitative dependability analysis. 
Cortelessa et al~\cite{cortelessa-2020} transform architectural UML models into non-functional models in form of General Stochastic Petri Nets (GSPNs).
The generated GSPN is used as an input model for a  safety and reliability analysis.
The approach by Brosch et al.~\cite{brosch-12} performs a reliability analysis based on an extension of the PCM which enables the annotation of absolute probabilities for software and hardware elements.
These approaches rely on the manual annotation of the reliability information retrieved through estimates, experience or calculation.
In contrast, \QAC{} retrieves reliability-relevant information (the critical regions) by an analysis of source code to compute the probabilities to enter critical regions in the software. This approach connects the architectural view explicitly to the final system and avoids erroneous results due to estimations and inconsistencies between model and implementation. 

Work by Töberg et al.~\cite{toeberg22} as well as Tuma et al~\cite{tuma2023checking} is similar to \QAC{} in that it combines architecture and code by using source-code analyses to verify whether assumptions in an architectural model hold in the implementation. However, these approaches do not modify the architectural model based on the results of the source-code analysis.

Kordon et al.~\cite{kordon-2008} propose a discipline they call \emph{verification-driven engineering}, which is based on using software models for verification. They identify several requirements, e.g., a mapping between model elements and a mathematical framework, which we achieve by mapping Palladio components to JML contracts. They also identify several challenges, like the fact that different model properties must be verified with different tools that are difficult to unify in one model.
\QAC{} offers a partial answer by being able to combine different analyses so long as all of their results can be encoded as a critical region.

\paragraph{Cooperative verification}
\QAC{} is related to the field of cooperative verification. Many approaches exist~\cite{ahrendt-2016,DBLP:conf/tacas/BeyerK22,DBLP:conf/birthday/BosH22} that allow tools to share proof obligations or partial results so that different components of a program or different properties of the same component can be proven by different tools. Alternatively, partial verification results can be used to guide test cases.
\QAC{}'s coverage regions serve both as an exchange format that allows multiple approaches to used in \QAC{} and as partial results that may guide tests or monitoring. But in addition to allowing tools to cooperate, \QAC{} computes a correctness measure to quantify how much of the program has been verified by a given combination of tools.

\section{Conclusion}
\label{sec:conclusion}

\paragraph{Summary}
\QAC{} is a modular approach to quantify software correctness
by combining architecture modeling and formal verification. 
We defined \QAC{} formally and implemented it for a subset of Java programs
using the modeling tool Palladio,
the formal verification tool \KeY{}, and various model counters.
We use \KeY{} 
to analyze each individual service and
extract its critical regions, i.e., the conditions under which it
possibly behaves erroneously.
These regions are combined with the Palladio Component Model, which includes an architecture model and usage profile, to compute an over-approximate error probability for the overall system.
Our case study demonstrates that our approach is feasible. While computing an exact error probability is impractical except for small programs, approximative approaches lead to good run times. 

\paragraph{Future work}
Our approach allows us to modularly combine different kinds of analyses, but we mainly focused on deductive verification.
We would like to investigate other analyses like tests and type checkers and formally establish how they can be used to compute coverage regions.
In addition, we want to investigate how we may increase \QAC{}'s precision with tools that over- instead of under-approximate the correctness probability.
Furthermore, this paper only considered reliability properties. We want to apply \QAC{} to 
security properties by extending the usage profile to capture attacker capabilities and attack costs.

\section*{Data Availability Statement}

Our implementation, along with the input and logs for the case study, is available on Zenodo.~\cite{florian_lanzinger_2022_7473812}

\begin{acks}
We would like to thank the reviewers for their valuable feedback.

This work was supported by funding from the topic
    Engineering Secure Systems and the pilot program Core Informatics of the Helmholtz Association (HGF), KASTEL
    Security Research Labs,
    the research project SofDCar (19S21002) funded by the German Federal Ministry for Economic Affairs and Climate Action, and the German Research Foundation (DFG) under project number 499241390, HE8596/3-1 (FeCoMASS).
\end{acks}

\bibliographystyle{ACM-Reference-Format}
\bibliography{main}

\end{document}